\title{On strong alt-induced codes} 
\author{Ngo Thi Hien}
\institute{Hanoi University of Science and Technology\\ Email: hien.ngothi@hust.edu.vn}
\date{}
\begin{document}

\maketitle

\renewcommand\refname{\normalsize \centerline{ REFERENCES}}
\pagestyle{plain}
\pagestyle{myheadings}
\markboth{Ngo Thi Hien}{On strong alt-induced codes}

\begin{abstract}
Strong alt-induced codes, a particular case of alt-induced codes, has been introduced and considered by D. L. Van and the author in earlier papers.
In this note, an algorithm to check whether a regular code is strong alt-induced or not is proposed, and the embedding problem for the classes of prefix (suffix, bifix) strong alt-induced codes in both the finite and regular case is also exhibited.

\vskip 6pt
{\noindent\it Keywords:} Code, embedding problem, alt-induced code, strong alt-induced code, alternative code, strong alternative code.

{\noindent\it AMS Mathematics Subject Classification:} 94A45, 68Q45.

\end{abstract}

\section{Introduction}
The theory of length-variable codes has now become a part of theoretical computer science and of formal languages. A code is a language such that every text encoded by words of the language can be decoded in a unique way or, in other words, every coded message admits only one factorization into code-words. Codes are useful in many areas of application such as information processing, data compression, cryptography, information transmission and so on. For background of the theory of codes we refer to \cite{BP85,JK97,Shyr91}.

Alternative codes, an extension of the notion of ordinary codes, have been first introduced and considered by P. T. Huy et al. in 2004 \cite{HN2004}.
The authors demonstrated the importance characterizations as well as algorithms to test for alternative codes and their subclasses \cite{HH2012,Hf2016,Hf2017,HfV2017,HN2004,VNH2010}.
An alternative code is nothing but a pair $(X, Y)$ of languages such that $XY$ is a code and the product $XY$ is unambiguous \cite{HN2004}.
We say that the code $XY$ is induced by the alternative code $(X,Y)$. A code is said to be a {\it code induced by alternative code} (an {\it alt-induced code}, for short) if there exists an alternative code which induces it.

Strong alt-induced codes, a particular case of alt-induced codes which is, has been introduced and considered in \cite{HfV2017}. Several interesting characteristic properties of such codes were established.

As well-known, for many classes of codes, a slight application of the Zorn's lemma shows that every code in a class is included in a code maximal in the same class (not necessarily maximal as a code). For thin codes, regular codes in particular, the maximality is equivalent to the completeness, which concerns with optimal use of transmission alphabet. Thus maximal codes are important in both theoretical and practical points of view.

For a given class $C$ of codes, a natural question is whether every code $X$ satisfying some property $\mathfrak{p}$ (usually, the finiteness or the regularity) is included in a code $Y$ maximal in $C$ which still has the property $\mathfrak{p}$. This problem, which we call the {\it embedding problem} for the class $C$, attracts a lot of attention. Unfortunately, this problem was solved only for several cases by means of different combinatorial techniques (see \cite{BrP1999,HHV2004,VH2011} and the papers cited there). 

Our aim in this note is to propose an algorithm to check whether a regular code is strong alt-induced or not, as well as solving the embedding problem for the classes of prefix (suffix, bifix) strong alt-induced codes in both the finite and regular case.

\section{Preliminaries}
Let $A$ throughout be a finite alphabet, i.e. a non-empty finite set of symbols, which are called letters. Let $A^*$ be the set of all finite words over $A$. The empty word is denoted by $\varepsilon$ and $A^{+}$ stands for $A^{*} \setminus \{\varepsilon\}$. The number of all the occurrences of letters in a word $u$ is the {\it length} of $u$, denoted by $|u|$. Any subset of $A^*$ is a {\it language} over $A$.
A language $X$ is a {\it code} over $A$ if any word  in $A^*$ has at most one factorization into words of $X$. A code $X$ is maximal over $A$ if it is not properly contained in any other code over $A$. Let $C$ be a class of codes over $A$ and $X \in C$. We say that the code $X$ is maximal
in $C$ if it is not properly contained in any code in $C$.

	A word $u$ is called an {\it infix} (a {\it prefix}, a {\it suffix}) of a word $v$ if there exist words $x, y$ such that $v=xuy$ (resp., $v = uy$, $v= xu$). The infix (prefix, suffix) is {\it proper} if $xy \ne \varepsilon$ (resp., $y \ne \varepsilon$, $x \ne \varepsilon$).
A word $u$ is a {\it subword} of a word $v$ if, for some $n \geq 1, u = u_1  \dots  u_n, v = x_0u_1x_1  \dots  u_nx_n $  with  $u_1,  \dots, u_n, x_0,  \dots , x_n \in A^*$. If  $x_0 \dots x_n \ne \varepsilon$ then $u$ is called a {\it proper subword} of $v$.
The set of proper prefixes of a word $w$ is denoted by ${\rm Pref}(w)$. We denote by ${\rm Pref}(X)$ the set of all proper prefixes of the words in $X \subseteq A^*$. The notations ${\rm Suff} (w)$ and ${\rm Suff} (X)$ are defined in a similar way.

\begin{definition} Let $X$ be a non-empty subset of $A^+$.
\begin{enumerate}
\renewcommand{\labelenumi}{{\rm (\roman{enumi})}}
	\item $X$ is a {\it prefix} ({\it suffix}) {\it code} if no word in $X$ is a proper prefix (resp., suffix) of another word in $X$, and $X$ is a {\it bifix code} if it is both a prefix code and a suffix code;
	\item $X$ is an {\it infix} (a {\it p-infix}, a {\it s-infix}) {\it code} if no word in $X$ is an infix of a proper infix (resp., prefix, suffix) of another word in $X$;
	\item $X$ is a {\it subinfix} ({\it p-subinfix, s-subinfix}) {\it code} if no word in $X$ is a subword of a proper infix (resp., prefix, suffix) of another word in $X$;
	\item $X$ is a {\it hypercode} if no word in $X$ is a proper subword of another word in it.
\end{enumerate}
\end{definition}

Prefix codes and their subclasses play a fundamental role in the theory of codes (see \cite{BP85,BrP1999,HHV2004,HV2006,JK97,Shyr91,VH2011}).

	For $X, Y \subseteq A^*$, the {\it product} of $X$ and $Y$ is the set $XY = \{xy \ | \ x \in X, y \in Y\}$. The product is said to be {\it unambiguous} if, for each $z \in XY$, there exists exactly one pair $(x,y) \in X \times Y$ such that $z = xy$.

	For $w \in A^*$, we define
\[w^{-1}X = \{u \in A^* \ | \ wu \in X\}, \ Xw^{-1} = \{u \in A^* \ | \ uw \in X\}.\]
These notations are extended to sets in a natural way:
\[X^{-1}Y = \bigcup_{x \in X}x^{-1}Y, \ XY^{-1} = \bigcup_{y \in Y}Xy^{-1}.\]

Let $(X,Y)$ be a pair of non-empty subsets of $A^+$, and let $u_1, u_2, \dots u_n \in X \cup Y, n \geq 2$. We say that $u_1u_2 \dots u_n$ is an {\it alternative factorization on $(X, Y)$} if $u_i \in X$ implies $u_{i+1} \in Y$ and $u_i \in Y$ implies $u_{i+1} \in X$ for all $i = 1, 2, \dots, n-1$.
Two alternative factorizations $u_1u_2 \dots u_n$ and $v_1v_2 \dots v_m$ on $(X, Y)$ are said to be {\it similar} if they both begin and end with words in the same set $X$ or $Y$.

\begin{definition} \label{D:sicode}
	Let $X$ and $Y$ be two non-empty subsets of $A^+$. The pair $(X, Y)$ is called an {\it alternative code} if no word in $A^+$ admits two different similar alternative factorizations on $(X, Y)$.
An alternative code $(X, Y)$ is called a {\it strong alternative code} if it satisfies the following conditions
		\[X^{-1}(XY) \subseteq Y, (XY)Y^{-1} \subseteq X.\]
\end{definition}

For more details of alternative codes and their subclasses we refer to \cite{Hf2016,HN2004,VNH2010}.

\begin{definition} \label{D:icode}
	A subset $Z$ of $A^+$ is called a {\it  code induced by an alternative code} ({\it alt-induced code}, for short) if there is an alternative code $(X, Y)$ over $A$ such that $Z = XY$.
An alt-induced code $Z$ is called {\it strong} if there exists a strong alternative code $(X,Y)$ generating it, $Z=XY$.
\end{definition}

\medskip
Now we formulate, in the form of lemmas, several facts which will be useful in the sequel.


\begin{lemma}[\cite{HHV2004}] \label{L:Cph-Clocat} 
The product of two p-infix (s-infix, infix, p-subinfix, s-subinfix, subinfix, hyper) codes is a  p-infix (resp., s-infix, infix, p-subinfix, s-subinfix, subinfix, hyper) code.
\end{lemma}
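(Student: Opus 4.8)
The plan is to treat all seven classes by a single scheme: supposing $X$ and $Y$ both lie in the class $\mathcal{C}$ under consideration, show that were $XY$ not in $\mathcal{C}$, the pair of words witnessing this would already produce a violation of the defining condition of $\mathcal{C}$ inside $X$ or inside $Y$. First I would trim the bookkeeping. Each relation ``$u$ is an infix of a proper prefix (resp.\ suffix, infix) of $v$'' amounts to a factorization $v=\alpha u\beta$ with $\beta\neq\varepsilon$ (resp.\ $\alpha\neq\varepsilon$, $\alpha\beta\neq\varepsilon$); likewise ``$u$ is a subword of a proper prefix (resp.\ suffix, infix) of $v$'' says $u$ has a scattered occurrence in $v$ omitting the last letter of $v$ (resp.\ the first letter, one of the two), and ``$u$ is a proper subword of $v$'' means $u$ is a subword of $v$ with $|u|<|v|$. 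Since $(XY)^{\mathrm{rev}}=Y^{\mathrm{rev}}X^{\mathrm{rev}}$ and reversal exchanges prefixes with suffixes, the p- and s- variants of each claim are equivalent, so it is enough to handle the infix, p-infix, subinfix, p-subinfix and hypercode cases; that the product is in addition a code requires nothing further, since each of these seven classes is contained in the class of codes.

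The core is a case analysis. Suppose $z_{1}=x_{1}y_{1}$ and $z_{2}=x_{2}y_{2}$ (with $x_{i}\in X$ and $y_{i}\in Y$) witness $XY\notin\mathcal{C}$; by the reformulation, $z_{2}$ carries a distinguished occurrence of $z_{1}$ — an honest factor in the infix cases, a scattered subword in the subword cases — subject to the relevant nonemptiness condition. I would split on where the boundary between $x_{2}$ and $y_{2}$ falls relative to this occurrence: (A) in the uncovered part to the left of the occurrence; (B) in the uncovered part to the right of it; (C) inside the portion of $z_{2}$ covered by the $x_{1}$-part of $z_{1}$; (D) inside the portion covered by the $y_{1}$-part. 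In cases (A) and (C) one reads off an occurrence of $y_{1}$ inside $y_{2}$ — $y_{1}$ rather than $x_{1}$, so as to stay within the set $Y$ — and checks, using that $x_{1}\in A^{+}$ is nonempty (and, for the p- and s- variants, that the occurrence of $z_{1}$ in $z_{2}$ was of the prescribed restricted type), that this occurrence of $y_{1}$ is of exactly the kind forbidden by $Y\in\mathcal{C}$, a contradiction. Cases (B) and (D) are the left--right mirror and contradict $X\in\mathcal{C}$. A few degenerate boundary sub-cases also arise, in which $x_{1}$ turns out to be a proper suffix of $x_{2}$ or $y_{1}$ a proper prefix of $y_{2}$; these likewise violate membership in $\mathcal{C}$ — immediately for the reversal-invariant infix, subinfix and hypercode classes, and vacuously for the p- and s- classes, where such configurations do not occur. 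The hypercode case is the lightest: writing $z_{1}=uu'$ with the distinguished occurrence of $u$ lying in $x_{2}$ and that of $u'$ in $y_{2}$, a single comparison of $|x_{1}|$ with $|u|$ exhibits $x_{1}$ as a strictly shorter subword of $x_{2}$ or $y_{1}$ as a strictly shorter subword of $y_{2}$.

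I expect the real difficulty to sit in the subword-based classes (subinfix, p-subinfix, s-subinfix) rather than the infix ones. For the infix classes the distinguished occurrence of $z_{1}$ is a contiguous factor, the case split on the boundary is clean, and each branch directly produces a factorization of $x_{2}$ or $y_{2}$ of the required shape. For the subword classes one is juggling a scattered occurrence: after cutting it at the boundary between $x_{2}$ and $y_{2}$, one must check that the induced sub-occurrence of $y_{1}$ (or $x_{1}$) still omits the last letter of $y_{2}$ — or, in the s- variant, its first letter — and not merely that it lies in some proper infix, so that the contradiction is with the precise condition defining the class rather than with a weaker one. Getting this ``correct side'' step to go through uniformly, and checking that the four boundary positions really do exhaust all configurations, is the main obstacle; the rest is routine.
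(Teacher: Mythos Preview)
The paper does not prove this lemma; it is quoted from \cite{HHV2004} and used as a black box, so there is no in-paper argument to compare against.

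As a stand-alone proof sketch, your plan is sound. The reduction of the s-variants to the p-variants by reversal is correct, and for each remaining class the four-way split on the position of the $x_2$/$y_2$ boundary relative to the (contiguous or scattered) occurrence of $z_1$ does exhaust all possibilities. In the infix-type cases the branches go through exactly as you describe; in the subword-type cases the key check you flag---that after cutting the scattered occurrence at the boundary, the induced occurrence of $y_1$ in $y_2$ still avoids the last letter of $y_2$ (respectively, of $x_1$ in $x_2$ the first letter)---follows directly from the fact that the original occurrence of $z_1$ already avoids the last (respectively first) letter of $z_2$, together with $x_1,y_1\in A^+$. Your hypercode argument is also correct once one notes that the equality sub-cases ($|x_1|=|u|=|x_2|$ and its mirror) force $x_1=x_2$ and hence $|y_1|<|y_2|$. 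Two small points worth making explicit when you write it up: first, ``another word'' in the defining conditions is automatic in each branch because the relevant length inequality is strict; second, the two-sided classes (infix, subinfix) can be reduced to the one-sided ones since a proper infix is contained in a proper prefix or a proper suffix, and the assumed class membership of $X$ and $Y$ implies the one-sided versions.
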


A subset $X$ of $A^*$ is {\it thin } if there exists at least one word $w \in A^*$ which is not an infix of any word in $X$, i.e. $X \cap A^*wA^* = \emptyset$.

\begin{lemma}[{\cite{BP85}}, page 69] \label{L:ThinReg}
	Any regular code is thin.
\end{lemma}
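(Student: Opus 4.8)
The plan is to pass to a finite monoid recognizing $X$ and to play the code property off against the structure of its minimal ideal. The finite case is immediate and I would dispose of it first: any word strictly longer than the longest word of a finite code $X$ is an infix of no word of $X$. So assume $X$ is regular and possibly infinite, and fix a surjective morphism $\varphi : A^* \to M$ onto a finite monoid recognizing $X$ (for instance the syntactic morphism of $X$), so that $X = \varphi^{-1}(\varphi(X))$. Let $K$ denote the minimal two-sided ideal of $M$; since $M$ is finite, $K$ is non-empty and is a union of finite groups, so every $m \in K$ satisfies $m^{\,n}=m$ for some integer $n\ge 2$ (take $n=|G|+1$, where $G$ is the maximal subgroup of $M$ containing $m$). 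I would argue by contradiction, assuming $X$ is not thin, i.e.\ that every word of $A^+$ is an infix of some word of $X$.

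The argument is then short. Choose any word $v$ with $\varphi(v)\in K$; this is possible because $\varphi$ is onto and $K\neq\emptyset$. By the non-thinness assumption there are $p,q\in A^*$ with $c:=pvq\in X$, and since $K$ is a two-sided ideal we get $\varphi(c)=\varphi(p)\varphi(v)\varphi(q)\in K$. Now pick $n\ge 2$ with $\varphi(c)^{\,n}=\varphi(c)$; then $\varphi(c^{\,n})=\varphi(c)^{\,n}=\varphi(c)\in\varphi(X)$, hence $c^{\,n}\in\varphi^{-1}(\varphi(X))=X$. Since $c\in X\subseteq A^+$ and $n\ge 2$, the word $c^{\,n}$ admits the two distinct factorizations $(c^{\,n})$ and $(c)(c)\cdots(c)$ into words of $X$, contradicting that $X$ is a code. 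Therefore $X$ is thin.

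I do not foresee a genuine obstacle; the one step that deserves care is the implication ``$\varphi(v)\in K$ and $v$ an infix of $c\in X$ $\Rightarrow$ $\varphi(c)\in K$'', which is precisely where the minimality of $K$ is used, namely that a two-sided ideal is stable under left and right multiplication by arbitrary elements of $M$. The only background facts invoked are that a finite monoid has a minimal ideal which is a union of groups, and the elementary remark that a code contains no word of the form $c^{\,n}$ with $c$ in the code and $n\ge 2$; if one wishes to avoid semigroup structure theory altogether, one can replace $\varphi(v)$ by a suitable idempotent power of it and use only that $M$ is finite.
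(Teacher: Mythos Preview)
The paper does not prove this lemma; it is merely cited from Berstel and Perrin. Your argument is correct and is in fact the classical proof found there: pass to a finite recognizing monoid, land in the minimal ideal to obtain a code-word $c$ whose image is a group element, and deduce $c^{\,n}\in X$ for some $n\ge 2$, contradicting unique decipherability.

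One minor caveat concerns your closing aside. Replacing $\varphi(v)$ by an idempotent power does not by itself suffice, because idempotency of $\varphi(v)$ does not propagate to $\varphi(c)=\varphi(p)\varphi(v)\varphi(q)$: without $\varphi(c)$ lying in a subgroup one cannot conclude $\varphi(c)^{\,n}=\varphi(c)$ for any $n\ge 2$, only that $\varphi(c)^{\,i}=\varphi(c)^{\,i+p}$ for some $i\ge 1$, $p\ge 1$, which tells you nothing about membership of $c^{\,i}$ in $X$. Choosing the idempotent inside the minimal ideal (so that $eMe$ is a group and hence $\varphi(c)\in K$ is a group element) is exactly what makes the argument go through, so the structure of $K$ is not really avoidable along this route. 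The main proof, however, is complete and sound.
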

 
	Concerning the maximality of thin codes we have

\begin{lemma}[{\cite{BP85}}, Proposition 2.1, page 145] \label{L:ThinMb}
	Let $X$ be a thin subset of $A^+$. Then, $X$ is a maximal bifix code if and only if $X$ is both a maximal prefix code and a maximal suffix code.
\end{lemma}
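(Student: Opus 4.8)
The plan is to prove the two implications separately. The equivalence is genuinely asymmetric: thinness is needed only for the direction ``maximal bifix $\Rightarrow$ maximal prefix and maximal suffix'', and that is where essentially all the work lies, while the converse is a one-line argument.

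For the converse implication, suppose $X$ is both a maximal prefix code and a maximal suffix code. Being simultaneously a prefix code and a suffix code, $X$ is a bifix code (no thinness is needed for this). To see that $X$ is maximal among bifix codes, let $Y$ be any bifix code with $X \subseteq Y$; then $Y$ is in particular a prefix code containing the maximal prefix code $X$, whence $Y = X$. Hence $X$ is a maximal bifix code.

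For the direct implication, assume $X$ is a thin maximal bifix code. By the left--right symmetry of the statement --- replacing $X$ by its reversal $\widetilde{X}$, which is again a thin maximal bifix code --- it suffices to show that $X$ is a maximal prefix code, the statement that $X$ is a maximal suffix code then following by applying the same argument to $\widetilde{X}$. Since $X$ is bifix it is a prefix code, so we must check that it admits no proper extension to a prefix code. Here I would invoke the classical interplay between maximality and completeness for thin codes: a thin maximal bifix code is complete, i.e. every word of $A^*$ is an infix of some word of $X^*$; and a thin complete code is a maximal code. Combining these two facts, $X$ is a maximal code; and a maximal code that happens to be a prefix code is automatically a maximal prefix code, since any prefix code containing it is a code containing it and hence equal to it. This yields the direct implication.

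The crux --- and the step I expect to be the main obstacle --- is the claim that a thin maximal bifix code is complete. This is not a formal consequence of the definitions; it rests on the structure theory of bifix codes, in particular on the notion of the degree of a maximal bifix code and on the fact that a thin maximal bifix code has finite degree, from which completeness follows. The companion statement, ``thin and complete $\Rightarrow$ maximal code'', is the classical characterization of maximality for thin codes. A more elementary route would attempt the direct implication by contraposition: starting from a word $w \notin X$ with $X \cup \{w\}$ a prefix code, one would use thinness to splice $w$ with a suitable padding word into a word $w'$ for which $X \cup \{w'\}$ is still a bifix code, contradicting the maximality of $X$ as a bifix code --- but arranging the padding so that the suffix condition is simultaneously restored is precisely the delicate point, which is why routing the argument through completeness is preferable.
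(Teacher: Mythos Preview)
The paper does not give its own proof of this lemma; it is quoted from \cite{BP85} (Proposition~2.1, p.~145) and used as a black box. There is therefore no in-paper argument to compare your proposal against.

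That said, your outline is correct and is essentially the approach taken in \cite{BP85}. The converse direction is exactly the one-line argument you give. For the forward direction, the source also routes through completeness: one shows that a thin maximal bifix code is complete, then invokes the standard equivalence ``thin and complete $\Leftrightarrow$ maximal code'' to conclude that $X$ is a maximal code, hence a fortiori maximal among prefix codes and among suffix codes. The one point I would adjust is your remark that the step ``thin maximal bifix $\Rightarrow$ complete'' rests on the degree theory of bifix codes: in \cite{BP85} this step is carried out by a direct combinatorial padding argument of precisely the kind you sketch in your final paragraph (using a word that is not a factor of any word of $X$ to modify a putative extension so that adjoining it still yields a bifix code), so the ``elementary route'' you describe as an alternative is in fact the route taken there. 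The notion of degree is developed only afterwards and is not needed for this proposition.
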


The following results are characterizations for strong alt-induced codes and their subclasses.

\begin{lemma} [\cite{HfV2017}] \label{L:siaCode-Char}
	Let $X$ and $Y$ be non-empty subsets of $A^+$. Then, $XY$ is a strong alt-induced code if and only if $X$ is a prefix code, $Y$ is a suffix code and $XY$ is a code.
\end{lemma}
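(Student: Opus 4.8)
The plan is to prove the two implications separately; in both directions the idea is to convert assertions about alternative factorizations on $(X,Y)$ into assertions about ordinary factorizations over the product $XY$, so that the code property of $XY$ can be exploited.

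\smallskip
\noindent\emph{Necessity.} Suppose $(X,Y)$ is a strong alternative code, so that $XY$ is strong alt-induced. That $XY$ is a code follows at once from the definition of alternative code: two factorizations $z_1\cdots z_p=z_1'\cdots z_q'$ over $XY$, with $z_i=x_iy_i$ and $z_j'=x_j'y_j'$ where $x_i,x_j'\in X$ and $y_i,y_j'\in Y$, yield the two alternative factorizations $x_1y_1\cdots x_py_p$ and $x_1'y_1'\cdots x_q'y_q'$ of the same word, both beginning in $X$ and ending in $Y$, hence similar, hence identical by Definition~\ref{D:sicode}; in particular $p=q$ and $z_i=z_i'$ for all $i$. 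To see that $X$ is a prefix code, assume $x_2=x_1u$ with $x_1,x_2\in X$ and $u\neq\varepsilon$, fix $y\in Y$, and set $w=x_2y$; from $x_1(uy)=x_2y\in XY$ and $X^{-1}(XY)\subseteq Y$ we obtain $uy\in Y$, so $w=x_1\cdot(uy)$ and $w=x_2\cdot y$ are two distinct similar alternative factorizations of $w$, a contradiction. Symmetrically, $(XY)Y^{-1}\subseteq X$ forces $Y$ to be a suffix code.

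\smallskip
\noindent\emph{Sufficiency.} Conversely, assume $X$ is a prefix code, $Y$ a suffix code, and $XY$ a code; I claim $(X,Y)$ is then a strong alternative code. The inclusions are quick: if $xu=x'y'$ with $x,x'\in X$ and $y'\in Y$, then $x$ and $x'$, being prefixes of the same word, are comparable, so $x=x'$ because $X$ is a prefix code, whence $u=y'\in Y$; thus $X^{-1}(XY)\subseteq Y$, and $(XY)Y^{-1}\subseteq X$ is symmetric. The substantial step is to show $(X,Y)$ is an alternative code, i.e.\ that no word $w$ admits two distinct similar alternative factorizations. I would split into the four cases according to whether the common first and last factors of $w$ lie in $X$ or in $Y$, and in each case pad $w$ on the left by a fixed $x_0\in X$ and/or on the right by a fixed $y_0\in Y$ so that both alternative factorizations turn into genuine factorizations over $XY$: for instance a factorization $y_1x_1y_2\cdots y_nx_n$ becomes $(x_0y_1)(x_1y_2)\cdots(x_{n-1}y_n)(x_ny_0)$, while $x_1y_1\cdots x_ny_n$ already is such a factorization. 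Since $XY$ is a code, the two padded factorizations have the same number of blocks and agree block by block; comparing the $X$-parts inside matching blocks and using that $X$ is a prefix code forces the successive elements of $X$ to coincide, and likewise the suffix-code property of $Y$ forces the elements of $Y$ to coincide. Hence the two factorizations of $w$ were already equal. Combined with the inclusions, this shows $(X,Y)$ is a strong alternative code, so $XY$ is a strong alt-induced code.

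\smallskip
I expect the crux to be precisely this alternative-code verification in the sufficiency part: an alternative factorization that begins or ends with a factor from $Y$ is not literally an element of $(XY)^{+}$, so unique factorization over $XY$ cannot be invoked on it directly, and the padding device is what bridges the gap. One need only observe that the padding cannot be absorbed — since $X,Y\subseteq A^{+}$, the inserted $x_0$ and $y_0$ are non-empty and occupy blocks of their own — so no spurious identifications arise. The remaining bookkeeping (matching up blocks and peeling off prefixes and suffixes) is routine.
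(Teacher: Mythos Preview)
The paper does not supply a proof of this lemma: it is quoted from \cite{HfV2017} and stated without argument. There is therefore no in-paper proof to compare against. That said, your proposal is correct under the intended reading of the statement, namely that ``$XY$ is a strong alt-induced code'' means that the specific pair $(X,Y)$ is a strong alternative code. (Read literally as a property of the set $XY$ alone, the necessity direction would be false, since a product $XY$ may be rewritten as $X'Y'$ with $X'$ prefix and $Y'$ suffix even when $X$ is not prefix.) Your necessity argument is clean, and in the sufficiency direction your padding device does exactly what is needed: once both similar alternative factorizations are padded to genuine $(XY)^+$-factorizations of the same word, the code property of $XY$ gives block-by-block equality, and the prefix property of $X$ splits each block uniquely.

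It is worth noting that the paper itself records (in the Introduction, citing \cite{HN2004}) the characterization ``$(X,Y)$ is an alternative code if and only if $XY$ is a code and the product $XY$ is unambiguous.'' Granting that result, the sufficiency direction collapses: if $X$ is a prefix code then the product $XY$ is automatically unambiguous, so $(X,Y)$ is an alternative code whenever $XY$ is a code, and the two quotient inclusions follow from the prefix/suffix hypotheses exactly as you show. Your direct verification via the four-case padding argument is more self-contained but longer; the route through the cited characterization is the one most in the spirit of the present paper's toolkit.
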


\begin{lemma} [\cite{HfV2017}] \label{L:siCpsb-Char}
	Let $X$ and $Y$ be non-empty subsets of $A^+$.
\begin{enumerate}
\renewcommand{\labelenumi} {\rm(\roman{enumi})} 
	\item $XY$ is a prefix (maximal prefix) strong alt-induced code if and only if $X$ is a prefix (resp.,  maximal prefix) code and $Y$ is a bifix code (resp., $Y$ is both a maximal prefix code and a bifix code);
	\item $XY$ is a suffix (maximal suffix) strong alt-induced code if and only if $X$ is a bifix code (resp., $X$ is both a maximal suffix code and a bifix code) and $Y$ is a suffix (resp., maximal suffix) code;
	\item $XY$ is a bifix (maximal bifix thin) strong alt-induced code if and only if $X$ and $Y$ are bifix (resp., maximal bifix thin) codes.
\end{enumerate}
\end{lemma}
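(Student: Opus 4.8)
\noindent\emph{Proof sketch.} I would deduce all three parts from Lemma~\ref{L:siaCode-Char} together with two standard facts about prefix codes (and their left--right duals for suffix codes): \textbf{(P1)} the product of two prefix (suffix) codes is again a prefix (suffix) code; and \textbf{(P2)} if $X$ and $Y$ are prefix (suffix) codes, then $XY$ is a maximal prefix (suffix) code if and only if both $X$ and $Y$ are. Both are classical, and (P2) is in any case easy to re-derive from (P1) and the definition of maximality, so nothing heavy is needed.

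First I would settle the non-maximal part of (i). For the ``if'' direction: when $X$ is prefix and $Y$ is bifix, $Y$ is in particular prefix and suffix, so $XY$ is a prefix code by (P1), hence a code, and Lemma~\ref{L:siaCode-Char} then makes $XY$ a strong alt-induced code which is moreover prefix. For the ``only if'' direction: Lemma~\ref{L:siaCode-Char} already gives $X$ prefix and $Y$ suffix, and if some $y_1\in Y$ were a proper prefix of some $y_2\in Y$, then for any $x\in X$ the words $xy_1,xy_2\in XY$ would violate that $XY$ is prefix; hence $Y$ is prefix too, i.e.\ bifix. Part (ii) is the exact left--right dual of (i). The non-maximal part of (iii) then drops out: a bifix strong alt-induced code is at once a prefix and a suffix strong alt-induced code, so (i) gives $Y$ bifix and (ii) gives $X$ bifix, while conversely, if $X$ and $Y$ are bifix, then $XY$ is prefix and suffix by (P1) and its dual, hence bifix, and strong alt-induced by Lemma~\ref{L:siaCode-Char}.

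For the maximal cases I would read ``maximal prefix strong alt-induced code'' as a strong alt-induced code that is, in addition, a maximal prefix code (in analogy with the ``maximal bifix thin'' phrasing), so that (P2) applies directly. The maximal part of (i) is then immediate: if $XY$ is strong alt-induced and a maximal prefix code, the non-maximal part gives $X$ prefix and $Y$ bifix, and (P2) upgrades both $X$ and $Y$ to maximal prefix codes, so $Y$ is maximal prefix and bifix; conversely, $X$ maximal prefix together with $Y$ maximal prefix and bifix give, via (P2), that $XY$ is a maximal prefix code, and it is strong alt-induced as before. Part (ii) maximal is again the dual. For (iii) maximal I would first record that $XY$ is thin if and only if $X$ and $Y$ are thin: one direction is trivial (an infix of a factor is an infix of the product), and for the other, choosing $u$ not an infix of any word of $X$ and $v$ not an infix of any word of $Y$, a short case analysis on where $uv$ can sit inside a word $xy$ shows $uv$ is an infix of no word of $XY$. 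Then, combining the maximal parts of (i) and (ii) with Lemma~\ref{L:ThinMb} --- which for thin codes identifies ``maximal bifix'' with ``maximal prefix and maximal suffix'' --- I would get the chain: $XY$ thin, strong alt-induced and maximal bifix $\iff$ $X$ and $Y$ thin, maximal prefix and maximal suffix $\iff$ $X$ and $Y$ thin maximal bifix codes.

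Every individual step here is short; the place I expect to need real care is (iii) maximal, where the thinness hypothesis must be threaded through Lemma~\ref{L:ThinMb} in \emph{both} directions (besides being used to pass from $XY$ to its factors), and this is precisely why ``thin'' is indispensable in that part of the statement. Were one instead to read ``maximal'' as ``maximal within the class of prefix strong alt-induced codes'', the real obstacle would move: it would become the task of enlarging a non-maximal bifix code to a strictly larger bifix code (so that the product grows strictly), at which point Zorn's lemma for bifix codes, or thinness via Lemma~\ref{L:ThinMb}, would have to re-enter.
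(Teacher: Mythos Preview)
The paper does not prove this lemma; it is quoted from \cite{HfV2017} and used as a black box, so there is no in-paper argument to compare against.

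That said, your sketch is correct and is the natural route. Lemma~\ref{L:siaCode-Char} supplies the baseline ($X$ prefix, $Y$ suffix); the additional prefix hypothesis on $XY$ forces $Y$ to be prefix as well via exactly the $xy_1,xy_2$ argument you give, and the suffix case is dual. For the maximal versions, your reliance on the classical fact (P2) that a product of prefix codes is maximal prefix iff both factors are, together with Lemma~\ref{L:ThinMb} to convert ``maximal bifix'' into ``maximal prefix and maximal suffix'' under thinness, is precisely what is needed. The two places that deserve the care you flag are handled correctly: the thinness transfer (if $u$ avoids $X$ and $v$ avoids $Y$ then $uv$ avoids $XY$, by a case split on where the $x$--$y$ boundary falls inside $uv$) is valid, and Lemma~\ref{L:ThinMb} must indeed be invoked both for $XY$ and for each factor, which is why thinness has to be passed down and back up. Your reading of ``maximal prefix strong alt-induced code'' as ``strong alt-induced and maximal as a prefix code'' is consistent with how the lemma is applied later in the paper (e.g.\ in the proof of Theorem~\ref{T:EP-Cr.sipsb}).
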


	The following result concerns the maximality of prefix (suffix, bifix) codes. As usual, when $X$ is finite, by $\max X$ we denote the maximal wordlength of $X$.

\begin{lemma} [See \cite{BP85,BrP1999,VH2011}] \label{L:EP-Cpsb} The following holds true
\begin{enumerate}
\renewcommand{\labelenumi} {\rm(\roman{enumi})} 
	\item Any regular prefix (suffix, bifix) code is contained in a maximal one.
	\item Any finite prefix (suffix) code $X$ is contained in a maximal finite prefix (resp. suffix) code $Y$ with $\max Y = \max X$.
	\item Any finite bifix code is contained in a regular maximal bifix code.
\end{enumerate}
\end{lemma}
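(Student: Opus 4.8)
\medskip
\noindent\textbf{Proof plan.}
The three assertions rest on different classical completions, so the plan is to treat them separately, starting with the easiest. For item~(ii), let $X$ be a finite prefix code, put $n=\max X$, and take $Y=X\cup\bigl(A^n\setminus XA^*\bigr)$. Distinct words of $A^n$ are incomparable, and no word of $A^n\setminus XA^*$ is comparable, for the prefix order, with a word of $X$, so $Y$ is again a prefix code and clearly $\max Y=n=\max X$. To see that $Y$ is maximal I would check that every word $v\in A^*$ is a prefix of some word of $YA^*$. Note first that $A^n\subseteq XA^*\cup\bigl(A^n\setminus XA^*\bigr)\subseteq YA^*$. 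So if $|v|\ge n$, the length-$n$ prefix of $v$ lies in $YA^*$ and hence $v\in YA^*$. If $|v|<n$: when $v$ has a prefix in $X$ then $v\in XA^*\subseteq YA^*$; when $v$ is a proper prefix of a word of $X$ then $v$ is a prefix of a word of $X\subseteq Y$; otherwise $v$ is incomparable with every word of $X$, so each of its length-$n$ extensions lies in $A^n\setminus XA^*\subseteq Y$ and $v$ is a proper prefix of a word of $Y$. In every case $v$ is a prefix of a word of $YA^*$, so $Y$ is a maximal prefix code, and the suffix statement is the left--right mirror, with $Y=X\cup(A^n\setminus A^*X)$.

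For the prefix case of item~(i), I would first invoke the standard criterion that a prefix code $X$ is maximal exactly when every word of $A^*$ is a prefix of some word of $XA^*$, which turns maximality into a density condition. A regular code is thin by Lemma~\ref{L:ThinReg}, so there is a word $w$ that is not a factor of any word of $X$. Let $T$ be the (regular) set of words incomparable, for the prefix order, with every word of $X$, let $Z$ be the set of prefix-minimal elements of $Tw$, and take $Y=X\cup Z$. Because $w$ is not a factor of $X$, a word $uw$ with $u\in T$ has no prefix in $X$ (otherwise $u$ would be comparable with a word of $X$, contradicting $u\in T$) and is not a proper prefix of any word of $X$ (otherwise $w$ would be a factor of that word); together with the fact that prefix-minimal elements are pairwise incomparable, this shows $Y$ is a prefix code, and $Y$ is regular since $T$ and $Tw$ are regular and passing to prefix-minimal elements preserves regularity. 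Finally $Y$ meets the density condition: a word of $A^*$ either has a prefix in $X$, or is a proper prefix of a word of $X\subseteq Y$, or lies in $T$, and in the last case $u\in T$ gives $uw\in ZA^*\subseteq YA^*$, so $u$ is a prefix of a word of $YA^*$. Hence $Y$ is a regular maximal prefix code containing $X$, and the suffix case follows by the left--right mirror.

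The bifix case of item~(i), and with it item~(iii), is where the real work lies. By Lemma~\ref{L:ThinMb} a thin code is a maximal bifix code precisely when it is simultaneously a maximal prefix and a maximal suffix code, so it would suffice to complete a regular bifix code in both directions at once; the naive two-step procedure fails, however, because completing to a maximal prefix code generally destroys suffix-maximality, and conversely. The plan is instead to use the classical \emph{internal} completion of a thin bifix code: with a word $w$ not a factor of $X$, one fills the prefix-gaps and the suffix-gaps simultaneously by words incomparable, for \emph{both} orders, with the words already present, iterating this and passing to the limit, and then invokes Lemma~\ref{L:ThinMb} to read off maximality. This produces a maximal bifix code containing $X$; moreover, when $X$ is finite (or merely regular) only finitely many residual configurations occur along the construction, so the resulting maximal bifix code is regular, which is exactly the content of item~(iii). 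I expect making this internal completion precise --- and in particular proving that it stabilizes on finitely many residual types in the regular case --- to be the main obstacle.
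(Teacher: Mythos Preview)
The paper does not prove this lemma at all: it is stated as a known result and attributed to \cite{BP85,BrP1999,VH2011}, with no argument given. So there is no ``paper's own proof'' to compare against; your proposal goes well beyond what the paper does.

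That said, a few comments on the proposal itself. Your treatment of item~(ii) is correct and is essentially the standard construction. Your treatment of the prefix (and, by reversal, suffix) part of item~(i) is also correct in outline: using thinness to pick a forbidden factor $w$ and completing via the prefix-minimal elements of $Tw$ is exactly the classical argument, and the regularity is preserved for the reasons you state.

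Where your plan is genuinely incomplete is the bifix half of item~(i) and item~(iii). The phrase ``internal completion \ldots\ filling prefix-gaps and suffix-gaps simultaneously, iterating and passing to the limit'' is not a proof, and in fact is not how the cited references proceed. The construction in \cite{BP85,BrP1999} is not an ad~hoc two-sided padding but rests on the structure theory of thin maximal bifix codes: one fixes a target \emph{degree} $d$ at least the maximum of the indicator $L_X$ on $X$, and builds the unique thin maximal bifix code of degree $d$ with a prescribed \emph{kernel}; regularity of the result then follows from regularity of $X$ through this kernel description. Your sketch neither identifies the invariant (the degree) that makes the iteration converge, nor gives any mechanism for controlling regularity along the way; the ``finitely many residual configurations'' claim is asserted rather than argued. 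If you want to turn this into an actual proof you will need to invoke the degree/kernel machinery explicitly, as the references do.
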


Let $X$ be a bifix code. We define the {\it indicator} of $X$ as the following function from $A^*$
into the set of integers:
$$L_X(w) = 1 + |w| - F_X(w),$$
where $F_X(w)$ is the number of occurrences of words of $X$ as infixes of $w$ and $|w|$ is the length of $w$.
An {\it interpretation} of a word $w$ is a triple $(s,x, p)$ where
$w = sxp$ and where $s \in (A^+)^{-1}X, x \in X^*, p \in X(A^+)^{-1}$. A {\it point} in the word $w$ is a pair $(u,v)$
such that $w = uv$ with $u,v \in A^*$. The interpretation $(s, x, p)$ is said to {\it pass} by the point
$(u, v)$ if there exist $y, z \in X^*$ such that $u = sy, v = zp$ and $x = yz$.
These notions can be extended to infinite words $w \in A^\omega$. An {\it interpretation} of $w$ is
a pair $(s,x)$ such that $w = sx, s \in A^-X$ and $x \in X^\omega$. The interpretation $(s,x)$ {\it  passes} by the point $(u, v)$ (with $u \in A^*, v \in A^\omega$) if there exist $y \in X^*, z \in X^\omega$ such that $u = sy, v=z$ and $x= yz$.

A (finite or infinite) word $w$ is called {\it full} if by any point of $w$ passes an interpretation. The set $X$ is called {\it sufficient} if the set of full words is infinite and {\it insufficient} otherwise.

A finite bifix code $X$ is called to be {\it nice} if either $X$ is insufficient or $X$ is sufficient and all infinite full words have the same number $n$ of interpretations.

\begin{lemma} [\cite{BrP1999}] \label{L:EP-Cfb}
Let $X$ be a finite bifix code.
\begin{enumerate}
\renewcommand{\labelenumi} {\rm(\roman{enumi})}
	\item If $X$ is insufficient, then for every $n \geq \max\{L_X(x) \ | \ x \in X\}$, $X$ is contained in a finite maximal bifix code of degree $n$.
	\item If $X$ is sufficient, two cases arise. Either there exist two infinite full words with a
different number of interpretations. Then $X$ is not contained in any finite maximal bifix code. Or all infinite full words have the same number $n$ of interpretations. The possible finite maximal bifix codes containing $X$ have all degree $n$ and there is a finite number of them.
\end{enumerate}
\end{lemma}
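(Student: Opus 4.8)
\medskip
\noindent\emph{Proof idea.} The plan rests on two ingredients. The first is the behaviour of the indicator and of interpretations under inclusion of bifix codes: if $X\subseteq Y$ are bifix codes then every occurrence of a word of $X$ as an infix of $w$ is also such an occurrence of a word of $Y$, so $F_X(w)\le F_Y(w)$ and hence $L_Y(w)\le L_X(w)$ for every $w\in A^*$; and since $(A^+)^{-1}X\subseteq(A^+)^{-1}Y$ and $X^*\subseteq Y^*$ (with the analogues $X^\omega\subseteq Y^\omega$), every interpretation of a finite or infinite word with respect to $X$ is again an interpretation with respect to $Y$, passing by exactly the same points. The second ingredient consists of the classical properties of a thin -- in particular finite -- maximal bifix code $Z$ of degree $d$ (see \cite{BP85}): $L_Z\le d$ everywhere, $Z$ is sufficient, every infinite word full with respect to $Z$ has exactly $d$ interpretations, and covering all the points of such a word by interpretations with respect to $Z$ requires all $d$ of them (dropping any one of them leaves a point uncovered).

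For part (ii), assume $X$ is sufficient and $X\subseteq Z$ with $Z$ a finite maximal bifix code of degree $d$. Let $w$ be an infinite full word of $X$ (one exists since $X$ is sufficient), and let $S$ be a family of interpretations of $w$ with respect to $X$ that together pass by every point of $w$. By the first paragraph the members of $S$ are interpretations of $w$ with respect to $Z$ and still pass by every point of $w$; hence $w$ is full with respect to $Z$, it has exactly $d$ interpretations there, and $S$ must consist of all $d$ of them. Since the interpretations of $w$ with respect to $X$ form a subfamily of the (exactly $d$) interpretations with respect to $Z$ and already contain the $d$ members of $S$, they are precisely these $d$; thus $w$ has exactly $d$ interpretations with respect to $X$. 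So all infinite full words of $X$ have one and the same number of interpretations, and whenever a finite maximal bifix code contains $X$ this common number is its degree. This yields the dichotomy of (ii): if two infinite full words of $X$ have different numbers of interpretations, no finite maximal bifix code contains $X$; otherwise every such code has degree equal to the common number $n$. For the finiteness statement, a finite maximal bifix code of degree $n$ containing $X$ has all its word lengths bounded by a function of $n$ and $\max X$ (keeping the indicator below $n$ confines the long codewords; equivalently, such a code is reached from a fixed one by a bounded number of internal transformations), so only finitely many subsets of $A^*$ are candidates.

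For part (i), assume $X$ is insufficient and fix $n\ge\max\{L_X(x)\mid x\in X\}$. The aim is to build a finite maximal bifix code $Z\supseteq X$ of degree $n$. A finite code is thin, so by Lemma~\ref{L:ThinMb} it is enough to produce $Z$ that is at once a maximal prefix code and a maximal suffix code and whose indicator is bounded by $n$ and attains $n$. I would build such a $Z$ by a saturation procedure: starting from $X$, while the current code fails to be a maximal prefix code or a maximal suffix code, adjoin a shortest word witnessing this failure, chosen so as to keep the code bifix and so that the indicator stays $\le n$ and still reaches $n$ -- the hypothesis $n\ge\max_x L_X(x)$ being exactly what makes the words of $X$ compatible with a degree-$n$ code, and Lemma~\ref{L:EP-Cpsb} guiding the completion steps. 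Insufficiency of $X$ is what forces termination: because $X$ has only finitely many full words, after finitely many additions the words at which the indicator is still below $n$ all lie in a fixed finite set, so the procedure stops at a finite code, which is then checked to be a maximal bifix code of degree $n$.

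The main obstacle is the construction in part (i): pinning down the saturation step -- which word to adjoin, and why adjoining it preserves both the bifix property and the bound $L\le n$ -- and, above all, proving that insufficiency forces the procedure to terminate after finitely many steps. In part (ii) the one delicate point is the argument that a family of interpretations of a word $w$ with respect to $Z$ that covers all points of $w$ must be the full family of $d$ interpretations, which is what turns the easy inequality ``at most $d$'' into the needed equality; everything else there is routine.
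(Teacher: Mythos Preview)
The paper does not prove this lemma at all: it is quoted verbatim from \cite{BrP1999} and used as a black box in the proof of Theorem~\ref{T:EP-Cf.sips}. So there is no ``paper's own proof'' to compare against; any proof you supply is necessarily external to the paper.

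On the substance of your sketch: your treatment of part~(ii) is along the right lines and rests on the correct classical facts about thin maximal bifix codes (bounded indicator, every long-enough word is full, and for a full word the $d$ interpretations are pairwise disjoint on deep points, so any covering family must be all of them). The finiteness of the set of finite maximal bifix codes of degree~$n$ containing $X$ is also handled in \cite{BrP1999} by the bound on word lengths you allude to.

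Part~(i), however, is where your proposal has a genuine gap, and you say so yourself. The ``saturation'' heuristic you describe---adjoining a shortest word witnessing non-maximality while keeping the code bifix and the indicator bounded---is not how \cite{BrP1999} proceeds, and it is not clear that such a step can always be performed while preserving both the bifix property and the bound $L\le n$; nor is it clear that insufficiency alone forces termination of that particular process. The actual argument in \cite{BrP1999} goes through the machinery of \emph{internal transformations} of finite maximal bifix codes (derived codes, kernels, and the degree-raising/lowering operations from \cite{BP85}, Chapter~III): one first embeds $X$ into some finite maximal bifix code of sufficiently large degree using the insufficiency hypothesis in a precise way, and then adjusts the degree by internal transformations that preserve the inclusion $X\subseteq Z$. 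Your Lemma~\ref{L:EP-Cpsb} is not the right tool to ``guide the completion steps'' here; the bifix constraint is what makes this delicate, and the internal-transformation technology is essential.
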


\section{Test for strong alt-induced codes}
In this section we propose an algorithm to test for regular strong alt-induced codes.
For this, we need more an auxiliary proposition. 

\begin{proposition} \label{L:Y-uZ}
If $Z = XY$ is a strong alt-induced code, then $X = Zy^{-1}$ and $Y = x^{-1}Z$ for all $x \in X, y \in Y$. Therefore, for all $w \in Z$ there exists $u \in {\rm Pref} (w)$ such that $Y = u^{-1}Z$.
\end{proposition}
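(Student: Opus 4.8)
The plan is to exploit the structural characterization of strong alt-induced codes from Lemma~\ref{L:siaCode-Char}: if $Z = XY$ is a strong alt-induced code, then $X$ is a prefix code, $Y$ is a suffix code, and $Z$ is a code. I would first fix $x \in X$ and $y \in Y$ and show $X = Zy^{-1}$, the claim $Y = x^{-1}Z$ being entirely symmetric. The inclusion $X \subseteq Zy^{-1}$ is immediate: for any $x' \in X$ we have $x'y \in XY = Z$, so $x' \in Zy^{-1}$. For the reverse inclusion, take $u \in Zy^{-1}$, i.e. $uy \in Z = XY$, so $uy = x'y'$ for some $x' \in X$, $y' \in Y$. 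The goal is to force $u = x'$ (and $y = y'$).

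The key step — and the place where the strong condition is really used — is to rule out the case where $u$ is a proper prefix of $x'$ or vice versa. Since $uy = x'y'$, one of $u, x'$ is a prefix of the other. If $x'$ is a proper prefix of $u$, write $u = x't$ with $t \in A^+$; then $x'y' = uy = x'ty$, so $y' = ty$, i.e. $y \in (y')^{-1}\{y'\}$... more usefully, $y$ is a proper suffix of $y'$, contradicting that $Y$ is a suffix code (both $y, y' \in Y$). Symmetrically, if $u$ is a proper prefix of $x'$, write $x' = us$ with $s \in A^+$; then $usy' = uy$, so $y = sy'$, and now I would use the strong condition directly: $y = sy' \in X^{-1}(XY)$? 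Not quite — instead note $x' = us \in X$ and $x = x \in X$, but the cleanest route is: from $x' = us$ we get $u \cdot sy' = x'y' \in Z$, and since $Z$ is a code with $x'y'$ its unique $X$-$Y$-factorization via Lemma~\ref{L:siaCode-Char} (unambiguity of the product, which holds for strong alternative codes), combined with $x's \in X \Rightarrow$ reading $(XY)Y^{-1} \subseteq X$ applied to $x'y' \cdot (y')^{-1}$... Let me instead argue: $u$ a proper prefix of $x' \in X$ with $x' \in X$ a prefix code is fine (proper prefixes of codewords are allowed), so here I must use $uy = x'y' \in Z$ and the fact that $x'y'$ is the \emph{unique} factorization; since $uy$ is also a product of something times $y \in Y$, and $Z = XY$ with $Y$ a suffix code forces the $Y$-component to be unambiguously determined — concretely, if $y \ne y'$ then $uy$ would have two factorizations $u\cdot y$ (with $u \in Zy^{-1}$, but we do not yet know $u \in X$) versus $x' \cdot y'$. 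The honest argument: $uy = x'y'$ with $y, y' \in Y$ and $Y$ suffix means neither is a proper suffix of the other, hence $|y| = |y'|$, hence $y = y'$, hence $u = x' \in X$. So the suffix-code property of $Y$ alone handles the first claim, and dually the prefix-code property of $X$ handles $Y = x^{-1}Z$; the "strong" hypothesis enters only through Lemma~\ref{L:siaCode-Char} guaranteeing these properties.

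For the final sentence: given $w \in Z = XY$, pick any factorization $w = xy$ with $x \in X$, $y \in Y$, and set $u = x$. Since $x \in X \subseteq A^+$ we have $u \ne \varepsilon$; since $y \in Y \subseteq A^+$, $u = x$ is a proper prefix of $w = xy$, so $u \in {\rm Pref}(w)$. By the first part, $Y = u^{-1}Z = x^{-1}Z$, which is exactly what is required. I expect the main obstacle to be stating the prefix/suffix comparison cleanly and making sure the degenerate empty-word cases cause no trouble — this is why one must verify $X, Y \subseteq A^+$ is genuinely used so that $u$ is a \emph{proper} prefix of $w$; no deep combinatorics is needed beyond Lemma~\ref{L:siaCode-Char}.
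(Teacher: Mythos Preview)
Your proposal is correct and follows essentially the same route as the paper: invoke Lemma~\ref{L:siaCode-Char} to obtain that $X$ is prefix and $Y$ is suffix, and then deduce the quotient identities from those properties alone. The paper's write-up is simply more compact, computing $x^{-1}Z = x^{-1}(XY) = (x^{-1}X)Y = \{\varepsilon\}Y = Y$ directly from the prefix property of $X$ (and dually for $Zy^{-1}$), whereas you arrive at the same conclusion by an equivalent element-wise comparison of suffixes; your meandering through the ``strong condition'' in the middle is unnecessary and can be dropped.
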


\begin{proof}
	Suppose that $Z = XY$ is a strong alt-induced code. Then, by Lemma~\ref{L:siaCode-Char}, $X$ is a prefix code and $Y$ is a suffix code.
	Since $X$ is a prefix code, it follows that $x^{-1}Z = x^{-1}(XY) = (x^{-1}X)Y = \{\varepsilon\}Y = Y$, for all $x \in X$. Hence, $Y = u^{-1}Z$ with $u \in {\rm Pref} (w)$, for all $w \in Z$.
Similarly, we also have $X = Zy^{-1}$, for all $y \in Y$.  \qed 
\end{proof}

Note that, $X$ is a prefix (suffix) code if and only if $X^{-1}X =\{\varepsilon\}$ (resp. $XX^{-1} =\{\varepsilon\}$). By this, and from Lemma~\ref{L:siaCode-Char} and Proposition~\ref{L:Y-uZ} we can exhibit the following algorithm for testing whether a given regular code is strong alt-induced or not.
As usual, when $Z$ is a regular set of $A^+$, $\min Z$ denotes the minimal wordlength of $Z$.

\medskip
{\noindent\bf Algorithm RSIC} {\it (A test for regular strong alt-induced codes)}

{\it Input:} A regular code $Z$ over $A$.

{\it Output:} $Z$ is a strong alt-induced code or not.
 

	1. Choose $w \in Z$, such that $|w| = \min Z$;
		Set $P = {\rm Pref}(w)\setminus \{\varepsilon\}$.

\leftskip 0cm 
	2. While $P \neq \emptyset$ do \{

\leftskip0.8cm 
	   Take $u \in P$; Set $Y = u^{-1}Z$;

	   If $YY^{-1} = \varepsilon$ then \{

\leftskip1.2cm 
	   		Take $y \in Y$; Set $X = Zy^{-1}$;

			If ($X^{-1}X = \varepsilon$) and ($Z = XY$) then goto Step 4;

\leftskip0.8cm 
	   \} 

	   $P := P \setminus \{u\}$;

\leftskip0.45cm 
	\} 

\leftskip 0cm 
	3. $Z$ is not a strong alt-induced code; STOP.

	4. $Z$ is a strong alt-induced code; STOP.

\medskip
Let us take some examples.

\begin{example}
	Consider the set $Z = \{a^2b, a^2ba, ba^2b, ba^2ba\}$ over $A = \{a, b\}$. By Algorithm~RSIC, we have:

	1. Choose $w = a^2b \in Z$ with $|w| = 3 = \min Z$.

\leftskip0.45cm 
		Set $P = {\rm Pref}(w)\setminus \{\varepsilon\} = \{a, aa\}$.

\leftskip 0cm 
	2. While $P \neq \emptyset$ do \{

\leftskip0.8cm 
	   2.1. Take $a \in P$; Set $Y = a^{-1}Z = \{ab, aba\}$.

\leftskip1.5cm 
	   Since $YY^{-1} = \varepsilon$, which implies $X = Zy^{-1}= \emptyset$.

\leftskip0.8cm 
	   $P := P \setminus \{a\} = \{aa\} \neq \emptyset$.

	   2.2. Take $aa \in P$; Set $Y = (aa)^{-1}Z = \{b, ba\}$.

\leftskip1.5cm 
	   Since $YY^{-1} = \varepsilon$, which implies $X = Zb^{-1}= \{a^2, ba^2\}$.

	   Because $X^{-1}X = \varepsilon$ and $Z = XY$, we goto Step 4.
	   
\leftskip0.45cm 
	\} 

\leftskip 0cm 

	4. $Z = \{a^2, ba^2\}\{b, ba\}$ is a strong alt-induced code, and the algorithm ends.
\end{example}

\begin{example}
	Consider the set $Z = \{a^nb^2, a^nb^2ab \ | \ n \geq 1\}$ over $A = \{a, b\}$. By Algorithm~RSIC, we have:

	1. Choose $w = ab^2 \in Z$ with $|w| = 3 = \min Z$.

\leftskip0.45cm 
		Set $P = {\rm Pref}(w)\setminus \{\varepsilon\} = \{a, ab\}$.

\leftskip 0cm 
	2. While $P \neq \emptyset$ do \{

\leftskip0.8cm 
	   2.1. Take $a \in P$; Set $Y = a^{-1}Z = \{a^{n-1}b^2, a^{n-1}b^2ab \ | \ n \geq 1\}$.

\leftskip1.5cm 
	   Since $YY^{-1} \neq \varepsilon$, which implies 
	   $P := P \setminus \{a\} = \{ab\} \neq \emptyset$.

\leftskip0.8cm 
	   2.2. Take $ab \in P$; Set $Y = (ab)^{-1}Z = \{b, bab\}$.

\leftskip1.5cm 
	   Since $YY^{-1} \neq \varepsilon$, it follows that $P = \emptyset\}$.
	   
\leftskip0.45cm 
	\} 

\leftskip 0cm 

	3. $Z$ is not a strong alt-induced code, and the algorithm ends.
\end{example}

\begin{example}
	Consider the set $Z = \{b^na^2b^ma \ | \ n, m \geq 1\}$ over $A = \{a, b\}$. By Algorithm~RSIC, we have:

	1. Choose $w = ba^2ba \in Z$ with $|w| = 5 = \min Z$.

\leftskip0.45cm 
		Set $P = {\rm Pref}(w)\setminus \{\varepsilon\} = \{b, ba, ba^2, ba^2b\}$.

\leftskip 0cm 
	2. While $P \neq \emptyset$ do \{

\leftskip0.8cm 
	   2.1. Take $b \in P$; Set $Y = b^{-1}Z = \{b^{n-1}a^2b^ma \ | \ n, m \geq 1\}$.

\leftskip1.5cm 
	   Since $YY^{-1} \neq \varepsilon$, which implies 
	   $P := P \setminus \{b\} = \{ba, ba^2, ba^2b\} \neq \emptyset$.

\leftskip0.8cm 
	   2.2. Take $ba \in P$; Set $Y = (ba)^{-1}Z = \{ab^ma \ | \ m \geq 1\}$.

\leftskip1.5cm 
	   Since $YY^{-1} = \varepsilon$, which implies $X = Z(aba)^{-1}= \{b^na \ | \ n \geq 1\}$.

	   Because $X^{-1}X = \varepsilon$ and $Z = XY$, we goto Step 4.
	   
\leftskip0.45cm 
	\} 

\leftskip 0cm 

	4. $Z = \{b^na \ | \ n \geq 1\} \{ab^ma \ | \ m \geq 1\}$ is a strong alt-induced code, and the algorithm ends.
\end{example}

\begin{theorem} \label{T:oRSIC}
	Given a regular code $Z$ over $A$, we can determine whether $Z$ is a strong alt-induced code or not in $O(m^3)$ worst-case time, 
where $m$ is the number of states in the deterministic finite automaton recognizing $Z$.
\end{theorem}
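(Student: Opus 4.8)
\medskip
\noindent\emph{Proof plan.}
Correctness of Algorithm~RSIC follows from Proposition~\ref{L:Y-uZ}, Lemma~\ref{L:siaCode-Char}, and the remarks that $X^{-1}X=\{\varepsilon\}$ iff $X$ is a prefix code and $YY^{-1}=\{\varepsilon\}$ iff $Y$ is a suffix code; so it remains to bound its running time. The plan is to show that, implemented on the given $m$-state deterministic automaton $\mathcal{A}=(Q,A,\delta,q_0,F)$ for $Z$, the algorithm performs fewer than $m$ iterations of its while-loop, each costing $O(m^2)$ elementary steps, after an $O(m^3)$ preprocessing phase. First I would bound the number of iterations: a shortest word $w\in Z$ is spelled by a shortest accepting path in $\mathcal{A}$, which visits no state twice, so $|w|=\min Z<m$ and $|P|=|{\rm Pref}(w)\setminus\{\varepsilon\}|<m$. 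The feature that keeps each iteration cheap is that every language produced by the algorithm is a ``sub-automaton'' of $\mathcal{A}$: $u^{-1}Z$ is recognized by $\mathcal{A}_u=(Q,A,\delta,q_u,F)$ with $q_u=\delta(q_0,u)$, while $Zy^{-1}$ is recognized by $\mathcal{A}^y=(Q,A,\delta,q_0,F^y)$ with $F^y=\{q\in Q\mid\delta(q,y)\in F\}$. In the preprocessing phase I would compute: a shortest $w\in Z$ by breadth-first search; the relation ``$q$ is reachable from $p$ by a nonempty word'' on $Q$, by transitive closure in $O(m^3)$; and the relation $N(p,q)\equiv[L_p\cap L_q\neq\emptyset]$ on $Q\times Q$, where $L_q$ is the language recognized by $\mathcal{A}$ started at $q$, by backward reachability from $F\times F$ in the product automaton $\mathcal{A}\times\mathcal{A}$, in $O(m^2)$.

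Next I would account for one iteration, for the current $u\in P$ (obtained from the previous one by appending a letter, so $q_u$ is updated in $O(1)$). The test $YY^{-1}=\{\varepsilon\}$ means ``$Y=L_{q_u}$ is a suffix code'', and this holds precisely when $L_p\cap L_{q_u}=\emptyset$ for every $p$ reachable from $q_u$ by a nonempty word: indeed $Y$ fails to be a suffix code exactly when some $y_2=xy_1\in Y$ with $x\in A^+$ and $y_1\in Y$, that is, $y_1\in L_p\cap L_{q_u}$ for $p=\delta(q_u,x)$. With the precomputed tables this costs $O(m)$. When it succeeds, a shortest $y\in Y$ is found by breadth-first search in $O(m)$, the set $F^y$ is obtained by running $y$ from every state in $O(m^2)$, and the test $X^{-1}X=\{\varepsilon\}$ — ``$X=L(\mathcal{A}^y)$ is a prefix code'' — is decided by the dual criterion: $X$ is a prefix code iff no state of $F^y$ reachable from $q_0$ can reach a state of $F^y$ by a nonempty word; again $O(m^2)$ with the precomputed reachability.

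The main obstacle is the remaining test $Z=XY$: $XY$ is naturally presented only by a nondeterministic automaton, and deciding $Z\subseteq XY$ by complementing it would take exponential time. Here I would use that $X$ has just been certified a prefix code, so the product $XY$ is unambiguous, i.e.\ the map $(x,y)\mapsto xy$ is injective on $X\times Y$; consequently $\#(XY\cap A^n)=\sum_{i=1}^{n-1}\#(X\cap A^i)\cdot\#(Y\cap A^{n-i})$ for all $n$, and the evident $2m$-state NFA $\mathcal{B}$ for $XY$ (two disjoint copies of $\mathcal{A}$, an $\varepsilon$-move from each state of $F^y$ in the first copy to $q_u$ in the second, accepting in the second copy at $F$) is unambiguous. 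The test then reduces to the conjunction of: (a) $XY\subseteq Z$, decided by testing emptiness of the $O(m^2)$-state product of $\mathcal{B}$ with the complement of $\mathcal{A}$, in $O(m^2)$; and (b) $\#(Z\cap A^n)=\#(XY\cap A^n)$ for $n=0,1,\dots,3m$, where the left-hand counts come from the standard length-counting recursion on $\mathcal{A}$ and the right-hand ones from the convolution above using the analogous counts on $\mathcal{A}^y$ and $\mathcal{A}_u$; this is $O(m^2)$. Because the generating functions of $Z$ and of $XY$ are rational with denominators of degree at most $m$ and $2m$, agreement of the first $3m+1$ coefficients forces equality of the two series, and, combined with $XY\subseteq Z$, this yields $XY=Z$; hence (a) $\wedge$ (b) is equivalent to $Z=XY$.

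Summing up, preprocessing is $O(m^3)$, each of the fewer than $m$ iterations costs $O(m^2)$, and issuing the verdict in Steps~3--4 is $O(1)$, for a total of $O(m^3)$. The delicate point, as flagged, is the test $Z=XY$: its naive treatment is exponential, and it is precisely the prefix-code property of $X$ — equivalently, the unambiguity of the product $XY$ — that validates the generating-function comparison and keeps the per-iteration cost at $O(m^2)$.
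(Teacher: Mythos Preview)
Your proof is correct and follows the same skeleton as the paper's argument—bounding the number of loop iterations by $|P|\le m-1$ and arguing that each iteration costs $O(m^2)$—but it is considerably more detailed. The paper's own proof is a brief sketch: it asserts $O(m^2)$ for the tests $YY^{-1}=\{\varepsilon\}$ and $X^{-1}X=\{\varepsilon\}$, $O(m)$ for forming $Y$ and $X$, and then multiplies, without ever addressing the equality test $Z=XY$ in its complexity count. Your treatment of that last test—using that $X$ has just been certified prefix so the product $XY$ is unambiguous, then combining the inclusion $XY\subseteq Z$ (emptiness in the product of the $2m$-state NFA for $XY$ with the complement of $\mathcal A$) with a census comparison $\#(Z\cap A^n)=\#(XY\cap A^n)$ for $n\le 3m$ justified by the bounded denominator degrees—is a genuine addition and is exactly the step where a naive implementation would be exponential. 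Your preprocessing (reachability closure, the pairwise nonemptiness table $N$) and per-iteration accounting are all sound and within the stated bounds; the residual-automaton descriptions of $u^{-1}Z$ and $Zy^{-1}$ are the right way to keep everything inside $\mathcal A$. In short, you recover the paper's $O(m^3)$ with a more complete justification, the substantive difference being your explicit handling of $Z=XY$.
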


\begin{proof}
Suppose that $\mathcal A$ is a minimal deterministic finite automaton recognizing $Z$, with $m$ is the number of states in $\mathcal A$.
Then, any word in $Z$ of minimal length which has the length less than or equal to $m$.
   
	In Step 1, we can choose $w \in Z$ such that $|w| = \min Z \leq m$, and the set $P$ has at most $m-1$ words. The best-case for the algorithm is when $P = \emptyset$, which takes $O(1)$ to perform the task.
In Step 2, for each word $u$ in $P$, it takes $O(m^2)$ worst-case time to perform for $YY^{-1}$ or $X^{-1}X$, and $O(m)$ worst-case time in finding $Y$ or $X$. Thus, the total running time for determining, whether $Z$ is a strong alt-induced code or not, is $O(m).O(m^2) = O(m^3)$ in the worst-case. \qed
\end{proof}

\section{Maximal strong alt-induced codes}
As usual, a language $Z$ is a {\it prefix (suffix, bifix, p-infix, s-infix, infix, p-subinfix, s-subinfix, subinfix, hyper) alt-induced (strong alt-induced) code} if it is an alt-induced (a strong alt-induced) code as well as a prefix (resp., suffix, bifix, p-infix, s-infix, infix, p-subinfix, s-subinfix, subinfix, hyper) code.

As well-known, the product of two alt-induced (strong alt-induced) codes is not, in general, a strong alt-induced.
For special subclasses of alt-induced (strong alt-induced) codes, we have however

\begin{proposition} \label{P:icode-Clocat} The following holds true
\begin{enumerate}
\renewcommand{\labelenumi} {\rm(\roman{enumi})} 
	\item The product of two prefix (suffix, bifix, p-infix, s-infix, infix, p-subinfix, s-subinfix, subinfix, hyper) alt-induced (strong alt-induced) codes is a prefix (resp., suffix, bifix, p-infix, s-infix, infix, p-subinfix, s-subinfix, subinfix, hyper) alt-induced (strong alt-induced) code.
	\item The product of two maximal prefix (suffix, bifix) alt-induced (strong alt-induced) codes is a maximal prefix (resp., suffix, bifix) alt-induced (strong alt-induced) code.
\end{enumerate}
\end{proposition}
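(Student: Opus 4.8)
The plan is to prove both parts by reducing everything to the corresponding facts about prefix/suffix/bifix codes and the product-closure lemmas already at hand, namely Lemma~\ref{L:Cph-Clocat}, Lemma~\ref{L:siaCode-Char}, and Lemma~\ref{L:siCpsb-Char}. First I would fix notation: let $Z_1 = X_1Y_1$ and $Z_2 = X_2Y_2$ be two strong alt-induced codes that are moreover both of the same type $\tau$, where $\tau$ ranges over prefix, suffix, bifix, p-infix, s-infix, infix, p-subinfix, s-subinfix, hyper. I want to show $Z_1Z_2$ is again a strong alt-induced code of type $\tau$ (and likewise, with ``strong'' dropped, for alt-induced codes). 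The key structural observation is that $Z_1 Z_2 = X_1 (Y_1 X_2) Y_2$, but this is not directly a product of two sets; the better grouping is to exhibit $Z_1 Z_2$ itself as $X' Y'$ for suitable $X', Y'$ and then invoke Lemma~\ref{L:siaCode-Char}, which says $X'Y'$ is a strong alt-induced code iff $X'$ is prefix, $Y'$ is suffix, and $X'Y'$ is a code.

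For part~(i): I would take $X' = X_1 Y_1 X_2 = Z_1 X_2$ and $Y' = Y_2$, so that $Z_1 Z_2 = X' Y'$. Since $Z_2 = X_2 Y_2$ is strong alt-induced, $Y_2$ is a suffix code (Lemma~\ref{L:siaCode-Char}), so $Y'$ is suffix. For $X'$: $Z_1$ is a prefix code (being a $\tau$-code with $\tau$ prefix, or more generally one checks $Z_1$ is prefix in the bifix/infix/etc.\ cases too, since all the listed classes are subclasses of the prefix codes except the suffix class — for the suffix case one symmetrically groups as $X_1 (Y_1 Z_2)$ instead), and $X_2$ is a prefix code, hence $Z_1 X_2$ is a prefix code by the standard fact that a product of prefix codes is prefix. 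Finally $X'Y' = Z_1 Z_2$ is a code because a product of two codes each of which is, say, prefix (or more generally because $Z_1, Z_2$ are codes of a type closed under product — this is exactly Lemma~\ref{L:Cph-Clocat} for the infix-type classes, and elementary for prefix/suffix/bifix). Then Lemma~\ref{L:siaCode-Char} gives that $Z_1 Z_2$ is strong alt-induced, and Lemma~\ref{L:Cph-Clocat} (together with the elementary prefix/suffix/bifix product facts) gives that $Z_1 Z_2$ is of type $\tau$; combining the two yields the claim. The ``alt-induced'' (non-strong) version follows by the same grouping once one recalls that every strong alt-induced code is alt-induced and that the ambiguity/alternative-factorization condition is inherited — in fact one can just note that the type-$\tau$ hypothesis already forces, via Lemma~\ref{L:siCpsb-Char} or directly, that the relevant factors are prefix/suffix, so the strong construction applies verbatim.

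For part~(ii): having shown in~(i) that $Z_1 Z_2$ is a prefix (resp.\ suffix, bifix) strong alt-induced code, it remains to upgrade ``prefix code'' to ``maximal prefix (strong alt-induced) code'' under the hypothesis that $Z_1, Z_2$ are maximal in their class. Here I would use Lemma~\ref{L:siCpsb-Char}: $Z_1 Z_2$ maximal prefix strong alt-induced amounts, via that characterization, to a statement about maximal prefix and bifix codes among the factors $X_i, Y_i$, and the standard fact that the product of a maximal prefix code with a maximal prefix code is a maximal prefix code (and analogously for suffix; for bifix one uses Lemma~\ref{L:ThinMb} to reduce to prefix-and-suffix, after noting the codes in question are thin/regular). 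Concretely, for the prefix case I would write $Z_1 Z_2 = X' Y'$ with $X' = Z_1 X_2$ and $Y' = Y_2$; by Lemma~\ref{L:siCpsb-Char}(i), maximality of $Z_1$ as a prefix strong alt-induced code gives $X_1$ maximal prefix and $Y_1$ maximal prefix and bifix, and maximality of $Z_2$ gives $X_2$ maximal prefix and $Y_2$ maximal prefix and bifix; then $X' = X_1 Y_1 X_2$ is a product of maximal prefix codes hence maximal prefix, $Y' = Y_2$ is maximal prefix and bifix, so Lemma~\ref{L:siCpsb-Char}(i) applied in the other direction gives $Z_1 Z_2 = X'Y'$ maximal prefix strong alt-induced. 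The suffix case is symmetric, and the bifix case follows from the prefix and suffix cases by Lemma~\ref{L:ThinMb} (the codes involved are regular, hence thin by Lemma~\ref{L:ThinReg}, when the inputs are regular; for the purely algebraic statement one argues directly that a product of maximal bifix codes is maximal bifix).

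The main obstacle I anticipate is the maximality step in part~(ii): unlike closure of the plain classes, ``maximal prefix $\times$ maximal prefix is maximal prefix'' requires genuine argument (completeness-type reasoning), and weaving it through the ``strong alt-induced'' wrapper means I must be careful that the decomposition $Z_1Z_2 = X'Y'$ I use is the \emph{same} decomposition whose factors I can certify as maximal prefix / bifix via Lemma~\ref{L:siCpsb-Char}. The prefix/suffix asymmetry in how one groups the triple product $X_1 Y_1 X_2 Y_2$ is the place where a careless argument would break, so I would handle the prefix, suffix, and bifix subcases separately rather than trying to treat them uniformly.
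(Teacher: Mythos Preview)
Your plan is correct and matches the paper's proof essentially line for line: the paper also groups $Z_1Z_2 = (Z_1X_2)\,Y_2$, checks that $Z_1X_2$ is prefix and $Y_2$ is suffix (invoking Lemma~\ref{L:siaCode-Char}/Lemma~\ref{L:siCpsb-Char}), and obtains the type-$\tau$ property from Lemma~\ref{L:Cph-Clocat}, then for~(ii) uses Lemma~\ref{L:siCpsb-Char}(i) in both directions together with the fact that a product of maximal prefix codes is maximal prefix. One small imprecision: not only the suffix class but also the s-infix and s-subinfix classes fail to be subclasses of the prefix codes, so those too require your dual grouping $X_1(Y_1Z_2)$; the paper, like you, simply says the remaining cases are handled symmetrically.
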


\begin{proof}
	(i) As known in~\cite{HfV2017}, the product of two prefix (suffix, bifix) alt-induced codes is a prefix (resp., suffix, bifix) alt-induced code. Now, we treat only the case of p-infix strong alt-induced codes. For the other cases the arguments are similar. Let $Z=XY$ and $Z'=X'Y'$ be p-infix strong alt-induced codes. Then, by Lemma~\ref{L:Cph-Clocat}, $ZZ'$ is a p-infix code. On the other hand, since $Z$ and $Z'$ are strong alt-induced codes,
by Lemma~\ref{L:siCpsb-Char}(i), $X, X'$ are prefix codes and $Y, Y'$ are suffix codes. Therefore, again by Lemma~\ref{L:siCpsb-Char}(i), $ZZ'$ is a strong alt-induced code because $ZX'$ is a prefix code and $Y'$ is a suffix codes. Hence, $ZZ'$ is a p-infix strong alt-induced code.

	(ii) Suppose that $Z = XY$ and $Z' = X'Y'$ are maximal prefix strong alt-induced codes. By (i) of the proposition, $ZZ'$ is a prefix strong alt-induced code.
	 On the other hand, since $Z$ and $Z'$ are maximal strong alt-induced codes,
by Lemma~\ref{L:siCpsb-Char}(i), $X, X'$ are maximal prefix codes and $Y, Y'$ are both maximal prefix codes and bifix codes. Therefore, again by Lemma~\ref{L:siCpsb-Char}(i), $ZZ'$ is a maximal prefix strong alt-induced code because $XYX'$ is a maximal prefix code and $Y'$ is both a maximal prefix code and a bifix code. For the remaining cases the arguments are similar.  \qed
\end{proof}

We now show that the embedding problem for the
classes of prefix (suffix, bifix) strong alt-induced codes has a positive solution in the regular case.

\begin{theorem} \label{T:EP-Cr.sipsb}
	Every regular prefix (suffix, bifix) strong alt-induced code is contained in a maximal regular maximal prefix (resp., suffix, bifix) strong alt-induced code.
\end{theorem}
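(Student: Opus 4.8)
The plan is to reduce the statement to the known embedding theorems for classical prefix, suffix, and bifix codes (Lemma~\ref{L:EP-Cpsb}) via the characterizations in Lemma~\ref{L:siCpsb-Char}. Consider first the prefix case. If $Z = XY$ is a regular prefix strong alt-induced code, then by Lemma~\ref{L:siCpsb-Char}(i) we know $X$ is a prefix code and $Y$ is a bifix code; moreover, since $Z$ is regular and $X = Zy^{-1}$, $Y = x^{-1}Z$ by Proposition~\ref{L:Y-uZ}, both $X$ and $Y$ are regular. First I would embed $Y$ into a maximal regular bifix code $\widehat Y$ (using Lemma~\ref{L:EP-Cpsb}(i), noting a bifix code is in particular a prefix code and a suffix code, and that a maximal bifix code is in particular a maximal prefix code), and embed $X$ into a maximal regular prefix code $\widehat X$. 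Then $\widehat X \widehat Y$ is, by Lemma~\ref{L:siCpsb-Char}(i), a maximal prefix strong alt-induced code, it is regular as a product of regular sets, and it contains $XY = Z$. The suffix case is dual, embedding $X$ into a maximal regular bifix code and $Y$ into a maximal regular suffix code and invoking Lemma~\ref{L:siCpsb-Char}(ii).

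For the bifix case, suppose $Z = XY$ is a regular bifix strong alt-induced code. By Lemma~\ref{L:siCpsb-Char}(iii), $X$ and $Y$ are bifix codes, and as above they are regular. I would embed each of $X$ and $Y$ into a maximal regular bifix code $\widehat X$, $\widehat Y$ by Lemma~\ref{L:EP-Cpsb}(i). Being regular, $\widehat X$ and $\widehat Y$ are thin by Lemma~\ref{L:ThinReg}, so by Lemma~\ref{L:ThinMb} a thin bifix code is maximal as a bifix code iff it is both maximal prefix and maximal suffix; hence $\widehat X$, $\widehat Y$ are maximal bifix thin codes. Then Lemma~\ref{L:siCpsb-Char}(iii) gives that $\widehat X \widehat Y$ is a maximal bifix (thin) strong alt-induced code containing $Z$, and it is regular.

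One technical point to address carefully is the distinction between "maximal in the class of (prefix/suffix/bifix) strong alt-induced codes" and "maximal as a prefix/suffix/bifix code." The construction above produces a code that is maximal as a prefix (suffix, bifix) strong alt-induced code in the sense captured by Lemma~\ref{L:siCpsb-Char} — i.e. the factor conditions that characterize maximality within the subclass are met — which is exactly what the theorem asks for; I would state this explicitly so that the two notions of "maximal" in the phrase "maximal regular maximal prefix (suffix, bifix) strong alt-induced code" are not conflated. The main obstacle, and the only place where real care is needed, is verifying that the embeddings of the factors can be chosen regular simultaneously and that the characterization lemmas genuinely upgrade to the maximal setting; once Lemma~\ref{L:siCpsb-Char} and Lemma~\ref{L:EP-Cpsb}(i) are in hand, the rest is a routine assembly, so the proof is short.
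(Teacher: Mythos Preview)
Your proof is correct and follows essentially the same route as the paper: show the factors $X,Y$ are regular, embed them separately via Lemma~\ref{L:EP-Cpsb}(i), and reassemble using Lemma~\ref{L:siCpsb-Char}. Two small remarks: your regularity argument via Proposition~\ref{L:Y-uZ} (taking $X=Zy^{-1}$, $Y=x^{-1}Z$ for a single $x,y$) is actually cleaner than the paper's syntactic-congruence computation; on the other hand, in the prefix case your parenthetical ``a maximal bifix code is in particular a maximal prefix code'' is not true in general and needs Lemmas~\ref{L:ThinReg} and~\ref{L:ThinMb} (exactly as the paper invokes them, and as you yourself do in the bifix case), so make that explicit there too.
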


\begin{proof}
	We deal with only the case of prefix strong alt-induced codes. For the other cases the argument is similar.
Suppose $Z$ is a regular prefix strong alt-induced code over $A$ such that $Z = XY$ with $\emptyset \neq X, Y \subseteq A^+$.

	Firstly, we show that $X$ and $Y$ are regular. Indeed, recall that, for any language $Z$ over $A$, the syntactic congruence of $Z$, denote by $\cong_Z$, is defined as follows:
	$${\rm For~}  u, v\in A^*: u \cong_Z v \ {\rm if~ and~ only~ if~}  \forall x, y\in A^*: xuy \in Z \Leftrightarrow xvy \in Z.$$
Then, it is easy to verify that, for any word $x \in A^*$, $x^{-1}Z$ is a union of equivalence classes of $\cong_Z$. Now, since $Z$ is regular, $\cong_Z$ has finite index, i.e. there exists only a finite number of equivalence classes according to the congruence $\cong_Z$.
On the other hand, because $Z$ is a strong alt-induced code, by Definition~\ref{D:sicode}, are verified the following equalities:
$$X = \bigcup_{y \in Y} Zy^{-1}, \ Y = \bigcup_{x \in X} x^{-1}Z.$$
Thus, by the above, $X$ and $Y$ are also unions of a finite number of equivalence classes of $\cong_Z$. Hence, $X$ and $Y$ are regular.

	Next, since $Z$ is a prefix strong alt-induced code, by Lemma~\ref{L:siCpsb-Char}(i), $X$ is a prefix code and $Y$ is a bifix code.
Therefore, by Lemma~\ref{L:EP-Cpsb}(i), there exist $M_X$ is a regular maximal prefix code and $M_Y$ is a regular maximal bifix code such that $X \subseteq M_X$ and $Y \subseteq M_Y$. Since $M_Y$ is a regular code, by Lemma~\ref{L:ThinReg}, $M_Y$ is a thin code. Thus, by Lemma~\ref{L:ThinMb}, $M_Y$ is both a regular maximal prefix code and a regular maximal suffix code. Hence, again by Lemma~\ref{L:siCpsb-Char}(i), $M_XM_Y$ is a regular maximal prefix strong alt-induced code which contains $Z$. \qed
\end{proof}

\begin{example}
	Let $X = (aa)^*b$, $Y = ab^*ab$ over $A = \{a, b\}$, and $Z=XY$. Then, $X$ is a regular prefix code and $Y$ is a regular bifix code. By Lemma~\ref{L:siCpsb-Char}(i), $Z$ is a regular prefix strong alt-induced code. It is not difficult to check that $M_X = a^*b$ is a regular maximal prefix code and $M_Y = ba + bb + ab^*a(a+b)$ is a regular maximal bifix code.
Thus, $M_XM_Y$ is a regular maximal prefix strong alt-induced code which contains $Z$.
\end{example}

\begin{example}
	Let $Z = (b + a(bb)^*a) (a + ab)$ over $A = \{a, b\}$. Then, by Lemma~\ref{L:siCpsb-Char}(ii), $Z$ is a regular suffix strong alt-induced code because $b + a(bb)^*a$ is a regular bifix code and $a + ab$ is a finite suffix code. It is easy to check that $b + ab^*a$ is a regular maximal bifix code and $a + ab + bb$ is a finite maximal suffix code. Hence, $(b + ab^*a)(a + ab + bb)$ is a regular maximal suffix strong alt-induced code which contains $Z$.
\end{example}

\begin{example}
	Over $A = \{a, b, c\}$, let $X = a+c$, $Y = c(a+b)^*c$ and $Z=XY$. Then, $X$ and $Y$ are regular bifix codes. By Lemma~\ref{L:siCpsb-Char}(iii), $Z$ is a regular bifix strong alt-induced code. It is easy to verify that $M_X = A$ is a finite maximal bifix code and $M_Y = a + b + c(a+b)^*c$ is a regular maximal bifix code.
Thus, $M_XM_Y$ is a regular maximal bifix strong alt-induced code which contains $Z$.
\end{example}

It is not true in general that any finite bifix code is contained in a finite maximal
one. Indeed, if we consider $X = \{a, bb\}$, then any maximal bifix code containing $X$ is
infinite, since no word of $ba^*$ can be added to $X$.
Thus, according to Lemma~\ref{L:siCpsb-Char} and Lemma~\ref{L:EP-Cpsb}, it is not true in general that any finite prefix (suffix, bifix) strong alt-induced code is contained in a finite maximal one.

The following result shows that a finite prefix (suffix, bifix) strong alt-induced code is always contained in a regular maximal prefix (suffix, bifix) strong alt-induced code.

	
\begin{theorem} \label{T:EP-Cf.sips}
	Let $Z=XY$ be a finite prefix (suffix, bifix) strong alt-induced code. Then,
\begin{enumerate}
\renewcommand{\labelenumi} {\rm(\roman{enumi})} 
	\item $Z$ is contained in a regular maximal prefix (resp. suffix, bifix) strong alt-induced code.
	\item In particular, $Z$ is contained in a finite maximal prefix (resp. suffix, bifix) strong alt-induced code, if moreover $Y$ is a nice bifix code (resp. $X$ is a nice bifix code, $X$ and $Y$ are nice bifix codes). 
\end{enumerate}
\end{theorem}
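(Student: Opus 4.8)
The plan is to reduce everything to the structure theorems for strong alt-induced codes (Lemma~\ref{L:siCpsb-Char}) together with the embedding results for the component codes (Lemma~\ref{L:EP-Cpsb} and, for the finite-to-finite case, Lemma~\ref{L:EP-Cfb}). Throughout I treat only the prefix case; the suffix case is symmetric, and the bifix case follows by the same recipe applied to both factors.

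First, for part~(i): if $Z = XY$ is a finite prefix strong alt-induced code, then by Lemma~\ref{L:siCpsb-Char}(i) the factor $X$ is a finite prefix code and $Y$ is a finite bifix code. By Lemma~\ref{L:EP-Cpsb}(ii) embed $X$ into a finite (hence regular) maximal prefix code $M_X$, and by Lemma~\ref{L:EP-Cpsb}(iii) embed $Y$ into a regular maximal bifix code $M_Y$. Exactly as in the proof of Theorem~\ref{T:EP-Cr.sipsb}, since $M_Y$ is regular it is thin (Lemma~\ref{L:ThinReg}), so by Lemma~\ref{L:ThinMb} it is a maximal prefix code as well as a maximal suffix code; in particular $M_Y$ is both a maximal prefix code and a bifix code. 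Then Lemma~\ref{L:siCpsb-Char}(i) gives that $M_X M_Y$ is a maximal prefix strong alt-induced code, and it is regular as a product of regular sets, and it contains $Z = XY \subseteq M_X M_Y$.

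For part~(ii), I want the enclosing code to be \emph{finite}. The factor $M_X$ is already finite by Lemma~\ref{L:EP-Cpsb}(ii), so the only obstacle is making $M_Y$ finite. Here is where the hypothesis that $Y$ is a \emph{nice} bifix code enters: by Lemma~\ref{L:EP-Cfb}, if $Y$ is insufficient then $Y$ embeds in a finite maximal bifix code (of any degree $n \geq \max\{L_Y(x) \mid x \in Y\}$), and if $Y$ is sufficient then, because $Y$ is nice, all infinite full words have the same number of interpretations, so again by Lemma~\ref{L:EP-Cfb}(ii) there is a finite maximal bifix code $M_Y$ containing $Y$. In both subcases we obtain a finite maximal bifix code $M_Y \supseteq Y$. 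Being finite, $M_Y$ is regular, hence thin, hence (Lemma~\ref{L:ThinMb}) a maximal prefix code as well. Applying Lemma~\ref{L:siCpsb-Char}(i) once more, $M_X M_Y$ is a maximal prefix strong alt-induced code; it is finite as a product of two finite sets, and it contains $Z$.

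The main point requiring care — really the only nonroutine step — is the finite-maximality of the bifix factor: without the niceness hypothesis a finite bifix code need not sit inside any finite maximal bifix code (the $\{a,bb\}$ example quoted just before the theorem), which is precisely why part~(ii) carries that extra assumption and why one must invoke the dichotomy in Lemma~\ref{L:EP-Cfb} rather than a naive Zorn-type argument. Once $M_Y$ is finite and maximal bifix, the remaining deductions — thinness, the prefix/suffix maximality via Lemma~\ref{L:ThinMb}, and the assembly via Lemma~\ref{L:siCpsb-Char} — are immediate, exactly as in Theorem~\ref{T:EP-Cr.sipsb}. For the bifix version of the theorem one runs this argument on both $X$ and $Y$ simultaneously, using that $M_X, M_Y$ are then each maximal bifix (and, being regular/finite and thin, maximal prefix and maximal suffix), and concludes by Lemma~\ref{L:siCpsb-Char}(iii).
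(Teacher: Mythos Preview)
Your proposal is correct and follows essentially the same route as the paper. The only cosmetic difference is in part~(i): the paper simply invokes Theorem~\ref{T:EP-Cr.sipsb} (a finite code being regular), whereas you reprove that argument inline; for part~(ii) your argument matches the paper's, with the added virtue that you make explicit the appeal to Lemma~\ref{L:ThinMb} to get that the finite maximal bifix $M_Y$ is also maximal prefix, a step the paper leaves implicit.
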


\begin{proof}
	(i) It follows immediately from Theorem~\ref{T:EP-Cr.sipsb}.

	(ii) Suppose $Z=XY$ is a finite prefix strong alt-induced code and $Y$ is a nice bifix code. Then, by the assumption and Lemma~\ref{L:siCpsb-Char}(i), $X$ is a finite prefix code and $Y$ is a nice bifix code.
Therefore, on one hand, by Lemma~\ref{L:EP-Cpsb}, there exists $M_X$ is a finite maximal prefix code such that $X \subseteq M_X$. On the other hand, by Lemma~\ref{L:EP-Cfb}, there exists $M_Y$ is a finite maximal bifix code  which contains $Y$.
Hence, again by Lemma~\ref{L:siCpsb-Char}(i), $M_XM_Y$ is a finite maximal prefix strong alt-induced code which contains $Z$.
For the other cases the argument is similar. \qed
\end{proof}

\begin{example}
	Consider the sets $Z = \{a, ba\}\{a, bb\}$ and $Z' = \{a, ba\}\{a\}$ over $A = \{a, b\}$. Then, by Lemma~\ref{L:siCpsb-Char}(i), $Z$ and $Z'$ are finite prefix strong alt-induced codes because $\{a, ba\}$ is a prefix code, and $\{a, bb\}$ and $\{a\}$ are bifix codes.
It is easy to see that $\{a, ba, bb\}$ is a maximal prefix code, and $\{a, ba^nb \ | \ n\geq 0\}$ is a regular maximal bifix code. Therefore, $\{a, ba, bb\}\{a, ba^nb \ | \ n\geq 0\}$ is a regular maximal prefix strong alt-induced code which contains $Z$.
On the another hand, since $\{a\}$ is a nice bifix code, it follows that $\{a\}$ is included in the finite maximal bifix code $\{a, b\}$. Thus, $\{a, ba, bb\}\{a, b\}$ is a finite maximal prefix strong alt-induced code which contains $Z'$.
\end{example}

\begin{example}
	We consider the sets $X = \{a, b\}$ and $Y = \{aa, aba, bba, b, c\}$ over $A = \{a, b, c\}$. Then, $X$ is a finite bifix code and $Y$ is a finite suffix code. 
By Lemma~\ref{L:siCpsb-Char}(ii), $Z =  XY$ is a finite suffix strong alt-induced code. Clearly, $M_X = A$ is a finite maximal bifix code and $M_Y = \{aa, ca, aba, bba, cba, b, c\}$ is a finite maximal suffix code.
Thus, $M_XM_Y$ is a finite maximal suffix strong alt-induced code which contains $Z$.
\end{example}

\begin{example}
	Over $A = \{a, b\}$, let $X = a + bbb$, $Y = b + aa$ and $Z=XY$. Then, $X$ and $Y$ are finite bifix codes. By Lemma~\ref{L:siCpsb-Char}(iii), $Z$ is a finite bifix strong alt-induced code. It is not difficult to verify that $M_X = a + ba^*ba^*b$ and $M_Y = b + ab^*a$ which are regular maximal bifix codes.
Thus, $M_XM_Y$ is a regular maximal bifix strong alt-induced code which contains $Z$.
\end{example}

\section{Conclusions}
The purpose of this note was to deal with the development of the class of alt-induced codes.
Closure under concatenation property for several subclasses of alt-induced codes was proposed (Proposition~\ref{P:icode-Clocat}).
An algorithm, with a cubic time complexity (Theorem~\ref{T:oRSIC}), to test whether a regular code is strong alt-induced or not was established (Algorithms~RSIC).
Especially, the embedding problem for the classes of prefix (suffix, bifix) strong alt-induced codes has a solution in both the finite and regular case (Theorem~\ref{T:EP-Cr.sipsb}, Theorem~\ref{T:EP-Cf.sips}).

In future works, we hope we can solve the embedding problem for the classes of strong alt-induced codes and many interesting problems for alt-induced codes as well as their subclasses.

\section*{Acknowledgment}
	The author would like to thank the colleagues in Seminar Mathematical Foundation of Computer Science at Institute of Mathematics, Vietnam  Academy of Science and Technology for attention to the work. Especially, the author expresses her sincere thanks to Prof. Do Long Van and Dr. Kieu Van Hung for their useful discussions. 


\end{document}